\documentclass[a4paper,12pt]{article}

\usepackage[utf8]{inputenc}
\usepackage[T1]{fontenc}
\usepackage{hyperref}
\usepackage{xurl}
\usepackage{graphicx}
\usepackage{multicol}
\usepackage{subcaption}
\usepackage{tabularx}
\usepackage{amsfonts}
\usepackage{amsthm}
\usepackage{amssymb}
\usepackage{amsmath}
\usepackage{dsfont}
\usepackage{thmtools}
\usepackage{algorithm}
\usepackage{algpseudocode}
\usepackage[colorinlistoftodos,prependcaption,textsize=tiny]{todonotes}
\usepackage{lineno}
\usepackage{sidenotes}
\usepackage{marginnote}
\usepackage{array}
\usepackage{multirow}
\usepackage{authblk}
\usepackage{soul}
\usepackage{enumitem}
\usepackage{booktabs} 

\newtheorem{theorem}{Theorem}
\newtheorem{lemma}{Lemma}
\newtheorem{corollary}{Corollary}

\newtheorem{remark}{Remark}

\DeclareMathOperator {\diag}{\text{diag}}

\usepackage[top=2cm, bottom=2cm, left=2cm, right=2cm]{geometry}

\title{A dynamic mechanism for prevalence of triangles in competitive networks}

\author[1,2,*]{M.N.~Mooij}
\author[3,4,5]{M.~Baudena}
\author[2,6]{A.S.~von~der~Heydt}
\author[3,7]{L.~Miele}
\author[1,2,**]{I.~Kryven}

\affil[1]{\small Mathematical Institute, Utrecht University, Utrecht, The Netherlands}
\affil[2]{\small Centre for Complex Systems Studies (CCSS), Utrecht University, Utrecht, The Netherlands}
\affil[3]{\small National Research Council of Italy, Institute of Atmospheric Sciences and Climate (CNR-ISAC), 10133 Torino, Italy}
\affil[4]{\small National Biodiversity Future Center (NBFC), 90133 Palermo, Italy}
\affil[5]{\small Copernicus Institute of Sustainable Development, Utrecht University, Utrecht, The Netherlands}
\affil[6]{\small Institute for Marine and Atmospheric Research (IMAU), Utrecht University, Utrecht, The Netherlands}
\affil[7]{\small INRAE, UR1115, Plantes et Systèmes de culture Horticoles (PSH), Site Agroparc
84914, France}

\affil[*]{m.n.mooij@uu.nl}
\affil[**]{i.v.kryven@uu.nl}

\date{}

\begin{document}

\maketitle

\section*{Abstract}
Triangles are abundant in real-world networks but rare in standard null models for sparse graphs.
Existing explanations typically rely on explicit triadic closure mechanisms or geometry‑based connection rules. 
We propose an alternative hypothesis: the frequent appearance of triangles may arise naturally from the requirement of dynamic stability that maintains coexistence of species in Lotka–Volterra systems with equal competitive interactions.
To evaluate this idea, we prove that, across all possible interaction graphs, coexistence is guaranteed whenever the coupling strength is below the reciprocal of the graph's maximum degree, and guaranteed not to occur when the coupling strength exceeds 1. This leaves a large gap that is unexplained by the graph degrees alone. We notice that the lower and upper bounds are achieved for star and complete graphs respectively and
to investigate further what structural properties of the interaction graph control the critical coupling within the gap, we optimise networks algorithmically while keeping the degree sequence fixed. We find that networks supporting stronger interaction strengths consistently exhibit higher clustering coefficients in several network models.
Moreover, in real-world grassland plant networks, we observe higher clustering and stronger stability than those expected from a configuration model with the same degree sequence. 
Our result suggests that triangles, and clustering in general, may emerge as a structural signature of stabilising competition.
\\ \\
{\noindent \bf Keywords:} Triadic closure, Stability in Complex Systems, Competitive Dynamics, Network Optimisation.

\section{Introduction}
In sufficiently sparse network models where edges are placed independently at random, triangles become increasingly rare as the network size grows~\cite{bollobas2001}. This contrasts with observations in many large real-world networks, which have more triangles than their corresponding null models with the same degree sequence, for example, see the discussion in~\cite{watts1998, boguna2021} and the references therein.
 
While triangles are unavoidable in dense networks due to the threshold and saturation effects~\cite{bollobas2001}, in sparse networks their abundance is typically attributed to two main mechanisms.
In social networks, this phenomenon is commonly explained by \emph{triadic closure} -- the increased probability that two vertices become connected when they have a common neighbour.  In particular, this is reflected in models with strong local clustering, such as the Watts–Strogatz small‑world model~\cite{watts1998}.
In geometric networks, where vertices have coordinates in a latent metric space, local clustering arises naturally from the \emph{triangle inequality}: if two vertices are both close in latent space to a third one, they are also likely to be close to each other and hence to induce a (topological) triangle~\cite{boguna2021}. 
Finally, it is worth mentioning that simply conditioning a generic random graph model, such as the Erdős–Rényi random graph or exponential random graph models, to contain many triangles typically produces highly regular and hence unrealistic structures as a consequence of measure degeneracy~\cite{chakraborty2026,chatterjee2013estimating}.

Many networks that feature clustering, however, do not fall under either of the above‑mentioned categories. 
In this paper we give a possible explanation for the surplus of triangles using asymptotic stability theory for dynamical systems with competitive interactions.
Namely, we consider a second-order system of ODEs governed by the Lotka-Volterra (LV) equations on a network with symmetric negative interactions. The interaction strength, referred to as the \emph{competitive pressure}, and the carrying capacity are identical for all species. To quantify stability, we determine the maximum competitive pressure (in absolute value) that the system can sustain while still admitting a coexistence equilibrium, that is, an equilibrium in which all species maintain strictly positive abundances in the large‑time limit. We identify the maximum competitive pressure using a bifurcation analysis and call it the \emph{critical coupling} of the system, with larger values indicating a more robust coexistence equilibrium.

The relation between the critical coupling and the structure of the network has been a topic of extensive research. Robert May argued that the critical coupling of systems with a weighted Erd\H{o}s--R\'enyi interaction matrix tends to zero as the system size grows~\cite{may1972will}. 
Gao et al.\ observed that a bifurcation at the critical coupling in large dimensional systems with random interactions is captured by a one-dimensional ‘effective’ manifold~\cite{gao2016universal}.
In our previous works, we connected the critical coupling to Katz centrality~\cite{mooij2024findinglargeindependentsets} and showed that this quantity is bounded away from zero when the maximum degree (or the maximum total weight around a vertex) remain finite~\cite{mooij2024stablecoexistenceindefinitelylarge}.

In this work, we develop an extremal graph theory argument to show that the critical coupling can take values only in the well‑defined interval $[\Delta^{-1},1],$ where $\Delta$ is the maximum degree of the network. We then focus on several random graph models and show that maximising the clustering while keeping the degrees fixed also tends to increase the critical coupling in networks sampled from these models, and the reverse also holds: optimising for critical coupling enhances clustering. We then turn to a real-world network inferred from herbaceous plant communities across Northern Eurasia~\cite{Scheifes2024} and show that both its clustering and its critical coupling are larger than what would be expected in the maximum‑entropy null model with an identical degree sequence. 
Taken together, these results suggest that the abundance of triangles in networks exceeding null-model predictions may reflect a selective bias toward systems that are more robust in the sense of asymptotic stability.

\section{Results}
\begin{figure}
    \centering
    \includegraphics[width=0.7\linewidth]{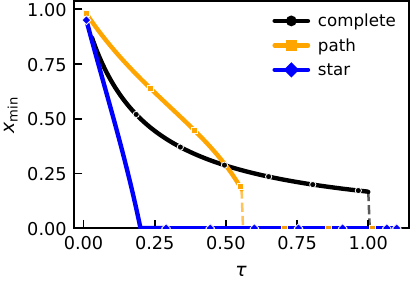}
    \caption{
    \textbf{Least abundant species vs. coupling strength.}
    The dependence of the minimum-abundance species value $x_{\min}$ at equilibrium on the coupling strength $\tau$ is shown for three graphs with 6 nodes: complete, path, and star. Abrupt jumps in the equilibrium branch appear as discontinuities, occurring at the critical coupling where the equilibrium loses feasibility.
    }
    \label{fig:phase_diagram}
\end{figure}
We consider a competitive LV system on a fixed, but possibly random, undirected network with a symmetric adjacency matrix $A$,
\begin{equation}
    \frac{\mathrm{d} x_i}{\mathrm{d}t} = x_i (1 - x_i) - \tau \sum_{j=1}^n A_{ij} x_i x_j, \label{eq:sys}
\end{equation}
where $\mathbf{x} \in \mathbb{R}^n$ is the vector of species abundances and the interaction strength $\tau \geq 0$ represents the competitive pressure.

In the absence of interactions between the nodes, meaning $\tau=0$, the system reduces to $n$ independent logistic equations, each with a carrying capacity of 1. In this case, the system will converge globally to an equilibrium state $\mathbf{x^*}$, where $x^*_i = 1$ for all $i=1,2,\dots,n$. The introduction of node interactions significantly alters the dynamics, with the system behaviour now depending on the underlying network structure.

The stability of system \eqref{eq:sys} is governed by its Jacobian,
\begin{equation}
    J(\mathbf{x})_{ik} = \left( 1 - 2 x_i - \tau \sum_{j=1}^n A_{ij} x_j \right) \mathds{1}_{\{ i = k \}} - \tau x_i \mathds{1}_{\{ A_{ik} = 1 \}}.
\end{equation}
Here, $\mathds{1}_{\{\cdot\}}$ denotes the indicator function (equal to $1$ if the condition holds and $0$ otherwise); thus $\mathds{1}_{\{ i = k \}}$ selects the diagonal terms, while $\mathds{1}_{\{ A_{ik} = 1 \}}$ selects the off-diagonal entries corresponding to edges in the network. At each value of $\tau$ for which it exists, the equilibrium fixed point with $x_i^* > 0$ for all $i = 1,2,\dots,n$ lies in the interior of the unit hypercube, is unique, and is given by
\begin{equation}
    \mathbf{x^*}(\tau) = (I + \tau A)^{-1} \mathbf{1}.
\end{equation}
If this feasible point is, in addition, linearly stable, then it is globally stable within the positive orthant~\cite{goh1977}. We refer to $\mathbf{x^*}(\tau)$ as a branch of feasible points. Along this branch, the Jacobian simplifies to
\begin{equation}
    J(\mathbf{x^*})_{ik}
    = -x_i \left( \mathds{1}_{\{ i = k \}} + \tau\, \mathds{1}_{\{ A_{ik} = 1 \}} \right).
    \label{eq:jac}
\end{equation}
Consider $\tau$ continuously increasing on the interval $[0,\tau_{\text{c}})$.
This gives rise to two distinct types of bifurcation points that depend on the network structure~\cite{mooij2024stablecoexistenceindefinitelylarge}:
\begin{enumerate}[label=(\arabic*)]
    \item A boundary transcritical bifurcation occurs when the branch of stable feasible points $\mathbf{x^*}(\tau)$ collides with one (or several) of the invariant hyperplanes \(H_i := \{\mathbf{x} \in \mathbb{R}^n : x_i = 0\}\).  
    
    \item A saddle node or pitchfork bifurcation occurs
    when the feasible equilibrium branch $\mathbf{x^*}(\tau)$ loses stability as the leading eigenvalue of the Jacobian crosses zero. By definition, this event happens when the branch is in the interior of the domain.
\end{enumerate}
 In both scenarios the feasible fixed point loses stability, and the corresponding equilibrium branch becomes confined to one of the boundary invariant hyperplanes $H_i$ (or to an intersection of several such hyperplanes).
This event is interpreted as the extinction of one (or several) species.

We define the \emph{critical coupling} $\tau_{\text{c}}$ as the smallest value of $\tau > 0$ for which at least one component of the fixed point $x_i^*(\tau)$ is zero, which corresponds to at least one extinction.
\begin{equation}
\tau_{\mathrm c} = \inf \bigl\{ \tau > 0 : \exists i \in \{1,\dots,n\} \text{ such that}\ x_i^*(\tau) = 0 \bigr\}.
\end{equation}
As illustrated in Figure~\ref{fig:phase_diagram}, the critical couplings correspond to discontinuities at the bifurcation points. 

In Section~\ref{sec:analytic} we derive lower and upper bounds for the critical coupling. Furthermore, we show in Section~\ref{sec:numerical} using optimisation algorithms, that the critical coupling and mean clustering are strongly dependent in several random graph models.
Finally, Section~\ref{sec:real_world_networks} shows a similar observation in real-world competition networks.

\subsection{Universal bounds for critical coupling}\label{sec:analytic}
\begin{figure}
    \centering
    \includegraphics[width=0.9\linewidth]{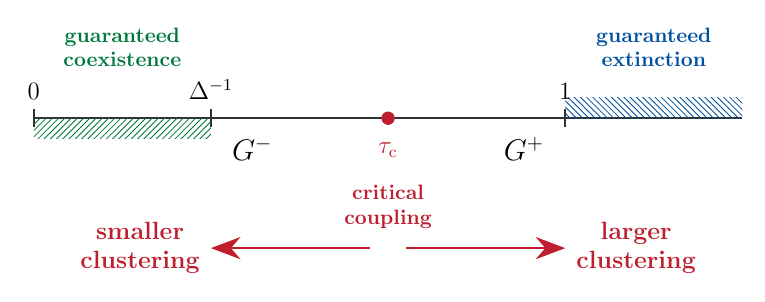}
    \caption{\textbf{Bounds for $\tau_{\mathrm{c}}$, the maximum coupling permitting coexistence.}
    Across all graphs $G$, the lower and upper bounds are attained at $G^-$ and $G^+$, respectively, where the definitions of $G^-$ and $G^+$ depend on the constraint imposed on $G$ -- either a fixed number of nodes or a fixed maximum degree, as explained in Theorem~\ref{th:general}.
    }
    \label{fig:tc}
\end{figure}
In this section, we derive lower and upper bounds for the critical coupling and show that $\tau_{\text{c}} \in [\Delta^{-1}, 1]$ for any graph with maximum degree $\Delta$. In particular, this shows that the lower bound on \(\tau_{\mathrm{c}}\) is independent of the network size, which shows that arbitrarily large biological systems can be stable.
We further show that for a fixed number of nodes $n$, the least robust structure is the star graph $S_n$, while the most robust is the complete graph $K_n$. 
The latter observation is also peculiar because for both $K_n$ and $S_n$ the maximum degree is the same, however the former maximises the number of triangles.

We formalize the resulting bounds in the following theorem, illustrated in Figure~\ref{fig:tc}.
\begin{theorem}\label{th:general}
    For every finite graph $G$ with maximum degree $\Delta>0$,
    \begin{equation}
        \tau_{\mathrm{c}}(G) \in \left[\Delta^{-1},\, 1\right],
        \label{eq:tc_bound}
    \end{equation}
    where the lower bound is attained whenever $G$ contains a component isomorphic to $S_{\Delta+1}$, while the upper bound is attained by graphs whose connected components are all complete graphs.

    Moreover, for every connected graph $G$ on $n\ge 2$ vertices, 
    \begin{equation}
        \tau_{\mathrm{c}}(G) \in \left[(n-1)^{-1},\, 1\right],
    \end{equation}
    and the lower bound is uniquely attained by the star graph $S_n$ and the upper bound by the complete graph $K_n$, for both of which the maximum degrees are $\Delta=n-1$.
\end{theorem}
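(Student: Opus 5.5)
The plan rests on two facts about the feasible branch $\mathbf{x^*}(\tau)=(I+\tau A)^{-1}\mathbf 1$. The first is its fixed-point form $x_i^*=1-\tau\sum_j A_{ij}x_j^*$. The second is that, by \eqref{eq:jac}, $J(\mathbf{x^*})=-D(I+\tau A)$ with $D=\diag(\mathbf{x^*})$. For $D$ positive, $D(I+\tau A)$ is similar to $D^{1/2}(I+\tau A)D^{1/2}$. So the feasible point is linearly stable iff $I+\tau A$ is positive definite, i.e.\ iff $1+\tau\lambda_{\min}(A)>0$.

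I will read $\tau_{\mathrm c}$ as the first $\tau$ at which the branch stops being feasible and stable: either some $x_i^*$ hits $0$, or $I+\tau A$ loses definiteness. These are the two bifurcation types listed above. For $K_n$ no component ever vanishes, so the literal infimum would be $+\infty$, and the stability part is needed. I will also use that $\tau_{\mathrm c}(G)$ is the minimum of $\tau_{\mathrm c}$ over the connected components of $G$, since $A$ is block diagonal.

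\emph{Lower bound.} Fix $\tau<\Delta^{-1}$ and set $T(\mathbf x)=\mathbf 1-\tau A\mathbf x$.
\begin{itemize}
\item $T$ is a contraction in the sup norm with constant $\tau\Delta<1$.
\item $T$ maps $[0,1]^n$ into $[1-\tau\Delta,1]^n\subset[0,1]^n$.
\end{itemize}
Hence its unique fixed point $\mathbf{x^*}(\tau)$ has all entries at least $1-\tau\Delta>0$. Moreover $\rho(\tau A)\le\tau\Delta<1$, so $I+\tau A$ is positive definite. Thus neither event occurs on $[0,\Delta^{-1})$, and $\tau_{\mathrm c}\ge\Delta^{-1}$.

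For the star $S_{\Delta+1}$ with $\Delta\ge2$, solve the two-variable system for the centre and the leaves. This gives $x_{\rm centre}=(1-\tau\Delta)/(1-\tau^2\Delta)$, which vanishes exactly at $\tau=\Delta^{-1}$. The case $\Delta=1$ gives $S_2=K_2$ and falls under the upper-bound case. Taking the minimum over components, any $G$ with an $S_{\Delta+1}$ component attains $\Delta^{-1}$.

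\emph{Upper bound.} If $G$ has an edge, its $2\times2$ principal submatrix has eigenvalue $-1$. By Cauchy interlacing, $\lambda_{\min}(A)\le-1$, so $I+\tau A$ is not positive definite for $\tau>1$, giving $\tau_{\mathrm c}\le1$.

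For $K_k$ we have $x_i^*=1/(1+\tau(k-1))>0$ for all $\tau$ and $\lambda_{\min}=-1$. Hence $\tau_{\mathrm c}(K_k)=1$, and a disjoint union of complete graphs has $\tau_{\mathrm c}=1$.

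\emph{Connected case and uniqueness.} The bounds $[(n-1)^{-1},1]$ follow from $\Delta\le n-1$.

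For uniqueness of the upper bound, let $G$ be connected but not complete. Then $G$ contains an induced $P_3$, whose smallest eigenvalue is $-\sqrt2$. Interlacing gives $\lambda_{\min}(A)\le-\sqrt2$, so $\tau_{\mathrm c}\le1/\sqrt2<1$.

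For uniqueness of the lower bound, suppose $\tau_{\mathrm c}=(n-1)^{-1}$. Then $\Delta=n-1$, so some vertex $i$ is adjacent to all others.
\begin{itemize}
\item At $\tau=\Delta^{-1}$ the contraction argument extends by continuity and gives entries at least $1-\tau\Delta=0$.
\item A zero can only occur at a vertex $i$ where all $\Delta$ neighbours satisfy $x_j^*=1$.
\item Then $0=x_i^*$, and for each neighbour $j$, $1=x_j^*=1-\tau\sum_{k\sim j,k\ne i}x_k^*$. This forces every neighbour's other neighbours to have value $0$.
\item Since $i$ is universal and all non-$i$ vertices have value $1$, no neighbour $j$ can have another neighbour, so $G=S_n$.
\end{itemize}
Otherwise, the branch is strictly positive at $\tau=\Delta^{-1}$. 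I also need positive definiteness to persist there, i.e.\ $\lambda_{\min}(A)>-\Delta$. This holds because equality forces a $\Delta$-regular bipartite graph, impossible for connected $G$ with a universal vertex unless $n=2$, which is $K_2=S_2$. Continuity then gives $\tau_{\mathrm c}>\Delta^{-1}$.

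\emph{Main obstacle.} I expect the most delicate step to be this uniqueness argument, together with fixing the precise definition of $\tau_{\mathrm c}$ so that the upper bound $1$ for complete graphs is meaningful.
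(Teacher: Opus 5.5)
Your proof is correct and follows the paper's architecture. The ingredients are the same: the lower bound from $x_i^*=1-\tau\sum_j A_{ij}x_j^*$ together with $\rho(A)\le\Delta$, the explicit star value $(1-\Delta\tau)/(1-\Delta\tau^2)$, the upper bound via $\tau_{\mathrm c}\le-\lambda_{\min}(A)^{-1}$ with $\lambda_{\min}(A)\le -1$, and the same neighbour-forcing argument for uniqueness of $S_n$. Your local variations refine the paper's argument rather than change it. The sup-norm contraction removes the paper's tacit assumption $x_j^*\ge 0$, and interlacing with an induced $K_2$ or $P_3$ replaces the cited fact that $K_n$ uniquely maximises $\lambda_{\min}$. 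You also make explicit two points the paper leaves implicit: the literal definition would give $\tau_{\mathrm c}(K_n)=\infty$, so the stability-loss event must be counted; and $\lambda_{\min}(A)>-\Delta$ is needed in the uniqueness step to rule out a stability-loss bifurcation, and to keep the branch continuous, at $\tau=(n-1)^{-1}$.
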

 While Figure~\ref{fig:tc} illustrates the statements of Theorem~\ref{th:general}, informally, the strategy behind the proof is as follows. Coexistence breaks down when at least one component of the equilibrium vector $\mathbf{x}^*(\tau)$ reaches zero, so we analyse how $\mathbf{x}^*(\tau) = (I + \tau A)^{-1}\mathbf{1}$ evolves as $\tau$ increases. First, simple algebraic manipulations show that no component can vanish before $\tau = \Delta^{-1}$, yielding the lower bound $\tau_{\mathrm c} \ge \Delta^{-1}$ established in Lemma~\ref{lem:critical_coupling_lower}.  
Second, spectral arguments imply that the first possible bifurcation cannot be delayed beyond $\tau = 1$, giving the upper bound $\tau_{\mathrm c} \le 1$ proved in Lemma~\ref{lem:critical_coupling_upper}.  
Third, interpreting $(I+\tau A)^{-1}$ through its Neumann expansion shows that the lower bound is attained for star graphs, as shown in Lemma~\ref{lem:dmax}.  
Finally, an extremal result from spectral graph theory shows that the upper bound is attained for complete graphs, completing the proof via Lemma~\ref{lem:critical_coupling_upper}.

\begin{lemma}\label{lem:critical_coupling_lower}
    Let $G$ be a simple graph on $n$ vertices with maximum degree $\Delta$, and let $\tau_{\mathrm c}(G)$ denote the critical coupling. Then \[\Delta^{-1} \le \tau_{\mathrm c}(G) \le -\lambda_{\min}(A)^{-1}.\]
\end{lemma}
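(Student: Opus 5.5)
For the lower bound I will show that every component of $\mathbf{x}^*(\tau)=(I+\tau A)^{-1}\mathbf{1}$ stays strictly positive for all $0\le\tau<\Delta^{-1}$. In that range the spectral radius of $\tau A$ is at most $\tau\Delta<1$, so $I+\tau A$ is invertible and the Neumann series
\[
\mathbf{x}^*(\tau)=\sum_{k\ge 0}(-\tau)^k A^k\mathbf{1}
\]
converges. Positivity cannot be read off term by term because the signs alternate. Instead I work with the fixed-point identity $\mathbf{x}^*=\mathbf{1}-\tau A\mathbf{x}^*$, which is just the equilibrium equation.

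The first step is to bound $\|\mathbf{x}^*\|_\infty$. Taking the sup norm in the identity gives
\[
\|\mathbf{x}^*\|_\infty\le 1+\tau\Delta\|\mathbf{x}^*\|_\infty ,
\]
hence $\|\mathbf{x}^*\|_\infty\le (1-\tau\Delta)^{-1}$. This bound alone is too weak, since it yields $x_i^*\ge 1-\tau\Delta/(1-\tau\Delta)$, which is positive only for $\tau<1/(2\Delta)$.

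I will sharpen it with a one-sided argument. For $\tau<\Delta^{-1}$, write $x_i^*=1-\tau\sum_j A_{ij}x_j^*$ and compare with $\mathbf{1}$. Suppose by continuity that $\mathbf{x}^*(\tau)\ge 0$ componentwise on $[0,\tau]$. Then $x_j^*\le 1$ for every $j$, directly from the identity, since the subtracted sum is nonnegative. Consequently
\[
x_i^*\ge 1-\tau\deg(i)\ge 1-\tau\Delta>0 .
\]
A continuity argument in $\tau$ closes this. The branch $\tau\mapsto\mathbf{x}^*(\tau)$ is continuous on $[0,\Delta^{-1})$ and starts at $\mathbf{1}$. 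Let $\tau_0$ be the first point at which some component vanishes. On $[0,\tau_0]$ all components are nonnegative, so the estimate above gives $x_i^*(\tau_0)\ge 1-\tau_0\Delta>0$, a contradiction. Therefore no component hits zero before $\Delta^{-1}$, and $\tau_{\mathrm c}\ge\Delta^{-1}$. I expect this bootstrapping step (nonnegativity implies the upper bound $1$, which in turn implies strict positivity) to be the crux. It is short, but it must be set up carefully with the infimum in the definition of $\tau_{\mathrm c}$.

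For the upper bound $\tau_{\mathrm c}\le -\lambda_{\min}(A)^{-1}$, note that $\lambda_{\min}(A)<0$ whenever $G$ has an edge, because $\operatorname{tr}A=0$. At $\tau_1:=-1/\lambda_{\min}(A)$ the matrix $I+\tau A$ becomes singular. I argue that the feasible branch cannot persist up to $\tau_1$. As long as $\mathbf{x}^*>0$, the Jacobian \eqref{eq:jac} equals $-X(I+\tau A)$ with $X=\diag(\mathbf{x}^*)$ positive definite. This matrix is similar to $-X^{1/2}(I+\tau A)X^{1/2}$, so its eigenvalues have the signs of those of $-(I+\tau A)$.

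Now suppose $\mathbf{x}^*(\tau)>0$ for all $\tau<\tau_1$. Let $\mathbf{v}$ be the eigenvector for $\lambda_{\min}$. Then $\mathbf{v}^\top\mathbf{x}^*(\tau)=\mathbf{v}^\top\mathbf{1}/(1+\tau\lambda_{\min})$. Two cases arise:
\begin{itemize}
\item If $\mathbf{v}^\top\mathbf{1}\ne0$, this quantity blows up as $\tau\uparrow\tau_1$. That is impossible, because feasibility forces $0<x_i^*\le1$, so the branch is bounded.
\item If $\mathbf{v}^\top\mathbf{1}=0$, I instead use that the leading Jacobian eigenvalue crosses zero at $\tau_1$. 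This is the saddle-node or pitchfork scenario~(2), in which the feasible stable branch is lost.
\end{itemize}
In both cases the branch must be lost by $\tau_1$, so $\tau_{\mathrm c}\le-\lambda_{\min}(A)^{-1}$. The degenerate orthogonal case is the delicate point, and there I would lean on the bifurcation framework of~\cite{mooij2024stablecoexistenceindefinitelylarge}.
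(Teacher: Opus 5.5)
Your argument is correct. The lower bound follows the paper, while the upper bound takes a partly different route.

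For the lower bound you use the same estimate as the paper, $x_i^*\ge 1-\tau\deg(i)\ge 1-\tau\Delta$. Your bootstrap at the first zero $\tau_0$ makes explicit something the paper uses tacitly: its step $x_i^*=1-\tau\sum_j A_{ij}x_j^*\le 1$ already presupposes $x_j^*\ge 0$.

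For the upper bound, the paper simply asserts that the leading eigenvalue of $-\diag(\mathbf{x}^*)(I+\tau A)$ crosses zero at $\tau=-\lambda_{\min}(A)^{-1}$. That is your second case applied to every graph. It is shorter, but it rests entirely on reading $\tau_{\mathrm c}$ as the loss of stable coexistence rather than as the first vanishing component. Your route differs in two ways:
\begin{itemize}
\item Your similarity/inertia argument justifies why the crossing occurs exactly at $-\lambda_{\min}(A)^{-1}$, which the paper leaves unexplained.
\item Your first case gives a stronger conclusion. Suppose $\mathbf 1$ has a nonzero projection onto the $\lambda_{\min}$-eigenspace; when the eigenvalue is repeated, take $\mathbf v$ to be that projection. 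Then the blow-up of $\mathbf v^\top\mathbf x^*$ against $0<x_i^*\le 1$ forces an actual extinction strictly before $-\lambda_{\min}(A)^{-1}$. In that case the bound holds even for the literal, extinction-only definition of $\tau_{\mathrm c}$.
\end{itemize}

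Your case split also shows that the broader reading is unavoidable in the degenerate case. For $K_n$, $\mathbf 1$ is orthogonal to the $(-1)$-eigenspace and no component ever vanishes, yet the paper sets $\tau_{\mathrm c}(K_n)=1$.

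Since you adopt that broader reading, add one line to your lower bound, as the paper does. For $\tau<\Delta^{-1}$ all eigenvalues of $I+\tau A$ are at least $1-\tau\Delta>0$. By your inertia argument the positive equilibrium is therefore also linearly stable there, so no saddle-node or pitchfork bifurcation precedes $\Delta^{-1}$.
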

\begin{proof}
    Let $A$ be the adjacency matrix of $G$, and let $\rho(A)$ denote its spectral radius. The feasible branch is given by
    \begin{equation*}
        \mathbf{x^*}(\tau) = (I + \tau A)^{-1}\mathbf{1},
    \end{equation*}
    which is well-defined for $\tau <\rho(A)^{-1}$ and, in particular, for $\tau < \Delta^{-1}\le\rho(A)^{-1}$.
    
    For $\tau=0$ this branch is stable. As $\tau$ increases,
    either a saddle-node or pitchfork bifurcation occurs when the leading eigenvalue of the Jacobian
    \begin{equation*}
        J(\mathbf{x^*}) = -\diag(\mathbf{x^*})(I + \tau A)
    \end{equation*}
    crosses zero, which happens when
    $\tau = -\lambda_{\min}(A)^{-1}.$
    Hence no such bifurcation occurs for
    \begin{equation*}
        \tau < \Delta^{-1} \le \rho(A)^{-1} \le -\lambda_{\min}(A)^{-1}.
    \end{equation*}
    
    It remains to rule out a transcritical bifurcation for $\tau < \Delta^{-1}$. We write the fixed point componentwise,
    \begin{equation*}
        x_i^* = 1 - \tau \sum_{j} A_{ij} x_j^* \le 1,
    \end{equation*}
    and therefore
    \begin{equation*}
        x_i^* \ge 1 - \tau \deg(i) \ge 1 - \tau \Delta.
    \end{equation*}
    Thus
    \begin{equation*}
     x_i^*(\tau) \ge 1 - \tau \Delta \ge 0, \; \text{for all } i =1,\dots,n,
    \end{equation*} 
    and so $\mathbf{x^*}(\tau)$ stays away from the invariant hyperplanes for these values of $\tau$. 
\end{proof}

\begin{lemma}\label{lem:dmax}
    Among all connected graphs on $n$ vertices, the star graph $S_n$ uniquely minimises the critical coupling, with
    \begin{equation}
        \tau_{\mathrm c}(S_{n}) = (n-1)^{-1}.
    \end{equation}
\end{lemma}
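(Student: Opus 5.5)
Two things need to be shown: first, $\tau_{\mathrm c}(S_n)=(n-1)^{-1}$; second, every connected graph $G\neq S_n$ on $n$ vertices has $\tau_{\mathrm c}(G)>(n-1)^{-1}$. The lower bound $\tau_{\mathrm c}\ge\Delta^{-1}\ge (n-1)^{-1}$ already holds for every connected graph by Lemma~\ref{lem:critical_coupling_lower}. So only the exact value for the star and strictness for all other graphs remain.

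\textbf{The star.} Let the centre be vertex $1$ with leaves $2,\dots,n$. By symmetry the fixed point has the form $x_1^*=c$ and $x_i^*=\ell$ for $i\ge2$. The equations $(I+\tau A)\mathbf{x}^*=\mathbf{1}$ become
\[
c+\tau(n-1)\ell=1,\qquad \ell+\tau c=1 .
\]
Solving gives
\[
c=\frac{1-\tau(n-1)}{1-\tau^2(n-1)},\qquad \ell=\frac{1-\tau}{1-\tau^2(n-1)} .
\]
For $\tau<(n-1)^{-1}$ the denominator is positive, and so are both numerators (since $n\ge2$). The centre abundance $c$ hits zero exactly at $\tau=(n-1)^{-1}$. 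At that point the denominator equals $1-(n-1)^{-1}$, which is positive for $n\ge3$. For $n=2$ the star is $K_2$ and the formula reduces to $c=\ell=(1+\tau)^{-1}$; this case is consistent with the value $1$ and is handled separately. By Lemma~\ref{lem:critical_coupling_lower}, no saddle-node or pitchfork bifurcation intervenes before $\Delta^{-1}$. Hence $\tau_{\mathrm c}(S_n)=(n-1)^{-1}$.

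\textbf{Strictness for other graphs.} If $G$ is connected and not a star, then $\Delta\le n-1$. When $\Delta<n-1$, Lemma~\ref{lem:critical_coupling_lower} already gives $\tau_{\mathrm c}\ge\Delta^{-1}>(n-1)^{-1}$. The main case is therefore $\Delta=n-1$ with $G\ne S_n$: some vertex $v$ is adjacent to all others, and at least one edge $uw$ joins two leaves of $v$.

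In this case I use the Neumann expansion
\[
\mathbf{x}^*=\sum_{k\ge0}(-\tau)^kA^k\mathbf 1 ,
\]
or equivalently the fixed-point identity $x_i^*=1-\tau\sum_j A_{ij}x_j^*$. Because each $x_j^*\le 1$, the bound $x_i^*\ge 1-\tau\deg(i)$ is strict as soon as some neighbour $j$ of $i$ has $x_j^*<1$. Any neighbour $j$ with $\deg(j)\ge1$ and $\tau>0$ satisfies $x_j^*<1$, as long as all components remain positive.

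At $\tau=\Delta^{-1}=(n-1)^{-1}$, every vertex $i\ne v$ has degree less than $n-1$, unless it is also universal. Suppose $i$ is universal, so $\deg(i)=n-1$. Then
\[
x_i^*=1-\tau\sum_{j\sim i}x_j^* .
\]
Each neighbour has $x_j^*<1$, because each $x_j^*$ has a positive neighbour. This gives $x_i^*>1-\tau(n-1)=0$. A vertex of smaller degree satisfies $x_i^*\ge 1-\tau\deg(i)>0$ directly.

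So all components are strictly positive at $\tau=(n-1)^{-1}$. By continuity they stay positive slightly beyond this value, provided the matrix $I+\tau A$ remains invertible and stable there. This needs $-\lambda_{\min}(A)^{-1}>(n-1)^{-1}$. It holds because $\lambda_{\min}(A)\ge-\rho(A)$ with equality only for bipartite $G$, and the triangle $v,u,w$ makes $G$ non-bipartite. It follows that $\tau_{\mathrm c}(G)>(n-1)^{-1}$.

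\textbf{Main obstacle.} The delicate step is justifying $x_j^*<1$ and continuity up to $\tau=\Delta^{-1}$, since that is the endpoint of the range covered by Lemma~\ref{lem:critical_coupling_lower}. First, positivity on $[0,\Delta^{-1})$ comes from that lemma, and the limit as $\tau\to\Delta^{-1}$ exists because $\rho(A)<n-1$. This strict inequality holds because a connected graph with $\Delta=n-1$ other than $K_n$ has $\rho<\Delta$; the case $K_n$ is covered directly, since there $x^*=(1+\tau(n-1))^{-1}\mathbf1$ never vanishes. Second, $x_j^*<1$ follows from $x_j^*=1-\tau\sum_{k\sim j}x_k^*$, which is less than $1$ because the $x_k^*$ are nonnegative on the closed interval and not all zero. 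With these two facts, the strict positivity argument above goes through.
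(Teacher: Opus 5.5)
Your overall architecture is sound, and it differs from the paper in a few places:
\begin{itemize}
\item Solving the symmetric $2\times 2$ system for $S_n$ is a clean alternative to the paper's Neumann/walk-count computation.
\item The reduction to the case $\Delta=n-1$ via Lemma~\ref{lem:critical_coupling_lower} is correct.
\item Your check that $\lambda_{\min}(A)>-(n-1)$ (by non-bipartiteness, or $\rho(A)<n-1$) makes explicit a point the paper leaves implicit: the branch is defined and stable slightly beyond $(n-1)^{-1}$.
\end{itemize}

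The gap is in the one step where $G\neq S_n$ must actually be used. You assert that every neighbour $j$ of a universal vertex has $x_j^*<1$ at $\tau_0=(n-1)^{-1}$, ``because each $x_j^*$ has a positive neighbour''. Later you say the $x_k^*$ with $k\sim j$ are ``not all zero''. This is never proved, and for a pendant neighbour $j$ of $v$ it is circular. There $x_j^*=1-\tau_0 x_v^*$, so $x_j^*<1$ is equivalent to $x_v^*(\tau_0)>0$, which is exactly what you are trying to show. Nothing in that paragraph uses the extra edge $uw$. It enters only your non-bipartiteness step, which concerns invertibility of $I+\tau A$, not positivity. Indeed, the same sentences applied verbatim to $S_n$ would ``prove'' $x_v^*(\tau_0)>0$, which is false. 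So as written the argument does not distinguish the star from other graphs at the crucial point.

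The repair is short, and it is essentially the paper's uniqueness argument. You need only one neighbour with value below $1$, and the extra edge supplies it.
\begin{itemize}
\item At $\tau_0$ you have $0\le x_j^*\le 1$ for all $j$, by continuity and the fixed-point identity.
\item Suppose $x_i^*(\tau_0)=0$. Then $\sum_{j\sim i}x_j^*=n-1$, which forces $\deg(i)=n-1$ and $x_j^*=1$ for every $j\neq i$.
\item Since $G\neq S_n$, there is an edge $ab$ with $a,b\neq i$.
\item Then $x_a^*\le 1-\tau_0 x_b^*=1-\tau_0<1$, contradicting $x_a^*=1$.
\end{itemize}
With this in place, your continuity-plus-spectral argument correctly yields $\tau_{\mathrm c}(G)>(n-1)^{-1}$.

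One smaller point: the case $n=2$, which you say is ``handled separately'', is never actually handled. It is trivial, since $K_2=S_2$ is the only connected graph on two vertices and $\tau_{\mathrm c}(K_2)=1$ by Lemma~\ref{lem:critical_coupling_upper}, but you should state it.
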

\begin{proof}
    For the star graph, the smallest component of $\mathbf{x^*}$ occurs at the central node. For $\tau<\rho(A)^{-1}$, the resolvent admits the convergent Neumann expansion
    \begin{equation*}
        (I+\tau A)^{-1} = \sum_{i=0}^\infty (-\tau)^i A^i,
    \end{equation*}
    so that $\mathbf{x^*}(\tau) = \sum_{i=0}^\infty (-\tau)^i A^i\mathbf{1}$. The entry $(A^i)_{kj}$ equals the number of walks of length $i$ from $k$ to $j$, hence the row sum $(A^i\mathbf{1})_k$ equals the total number $W_i(k)$ of walks of length $i$ starting at $k$. The $k$-th component of $\mathbf{x^*}(\tau)$ therefore reads
    \begin{equation*}
        x_k^*(\tau)=\sum_{i=0}^{\infty}(-1)^i \tau^i W_i(k).
    \end{equation*}
    For the central vertex, denoted $1$, the equilibrium value is
    \begin{equation*}
        x_1^*(\tau) = \sum_{i=0}^\infty (-1)^i \tau^i (n-1)^{\lfloor (i+1)/2 \rfloor}.
    \end{equation*}
    Separating even and odd terms gives geometric sums
    \begin{align*}
        x_1^*(\tau)
        &= \sum_{j=0}^\infty \tau^{2j} (n-1)^j
        - \sum_{j=0}^\infty \tau^{2j+1} (n-1)^{j+1} \\
        &= \frac{1}{1-(n-1)\tau^2}
        - \frac{(n-1)\tau}{1-(n-1)\tau^2}
        = \frac{1-(n-1)\tau}{1-(n-1)\tau^2}.
    \end{align*}
    The smallest positive root is therefore
    \begin{equation*}
        \tau_{\mathrm c}(S_{n}) = (n-1)^{-1}.
    \end{equation*}

    To prove uniqueness, suppose that for some connected graph $G$ we have $x_i^*=0$ at $\tau=(n-1)^{-1}$. Then
    \begin{equation*}
        \sum_{j=1}^{n} A_{ij} x_j^* = \tau^{-1} = n-1.
    \end{equation*}
    Since each $x_j^* \le 1$, it follows that $x_j^* = 1$ and $\deg(i)=n-1$ for all neighbours $j$ of $i$. Thus $S_{n}$ centred at $i$ is a subgraph of $G$.
    
    For any neighbour $j$ of $i$,
    \begin{equation*}
        1 + \tau \sum_{k=1}^{n} A_{jk} x_k^* = 1,
    \end{equation*}
    so
    \begin{equation*}
        \sum_{k=1}^{n} A_{jk} x_k^* = 0.
    \end{equation*}
    Since $x_k^* \ge 0$ by continuity of $\mathbf{x^*}(\tau)$, every neighbour of $j$ must have value $0$. Hence $j$ is only connected to $i$, and therefore $G = S_{n}$.
\end{proof}

\begin{corollary}\label{cor:remark}
    For any simple graph $G$ with maximum degree $\Delta$, $\tau_{\mathrm{c}}(G) \ge \Delta^{-1}$. If one of the connected components of $G$ is a star $S_{\Delta+1}$, then equality holds.
\end{corollary}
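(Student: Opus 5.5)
The inequality $\tau_{\mathrm c}(G)\ge\Delta^{-1}$ is Lemma~\ref{lem:critical_coupling_lower}, so only the equality case needs work. That lemma's argument never uses connectivity. It bounds each component by $x_i^*(\tau)\ge 1-\tau\Delta>0$ for $\tau<\Delta^{-1}$ and rules out an interior bifurcation because $\tau<\Delta^{-1}\le\rho(A)^{-1}\le-\lambda_{\min}(A)^{-1}$. Hence the lower bound holds for every simple graph.

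For the equality statement, I would use the block structure of the adjacency matrix. Order the vertices by connected components $C_1,\dots,C_m$. Then $A=\operatorname{diag}(A_1,\dots,A_m)$ is block diagonal, so $(I+\tau A)^{-1}\mathbf 1$ splits blockwise. On component $C_r$, the equilibrium $\mathbf{x}^*(\tau)$ restricted to $C_r$ equals $(I+\tau A_r)^{-1}\mathbf 1$, which is exactly the equilibrium of the system on $C_r$ alone. Since $\tau_{\mathrm c}$ is the first $\tau$ at which some component of $\mathbf{x}^*$ vanishes, this gives
\[
\tau_{\mathrm c}(G)=\min_r \tau_{\mathrm c}(C_r).
\]
Now suppose some component $C_r$ is isomorphic to $S_{\Delta+1}$. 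Lemma~\ref{lem:dmax} with $n=\Delta+1$ gives $\tau_{\mathrm c}(C_r)=\Delta^{-1}$. The explicit formula
\[
x_1^*(\tau)=\frac{1-\Delta\tau}{1-\Delta\tau^2}
\]
for the centre vertex makes this concrete: the centre reaches zero exactly at $\tau=\Delta^{-1}$, where the denominator $1-\Delta^{-1}$ is nonzero for $\Delta\ge2$. Combining this with the displayed identity gives $\tau_{\mathrm c}(G)\le\Delta^{-1}$, and together with the lower bound we get equality.

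The only step needing care is the case $\Delta=1$. There $S_2=K_2$, and the formula becomes $(1-\tau)/(1-\tau^2)=1/(1+\tau)$, which never vanishes. So equality at $\tau=1$ cannot come from a zero of $x^*$ along the feasible branch. Instead it comes from the interior bifurcation: $\lambda_{\min}(A)=-1$, so $I+\tau A$ becomes singular at $\tau=1$ and the branch ceases to exist there. I would treat this degenerate case separately, reading $\tau_{\mathrm c}$ as the end of the feasible stable branch (consistent with the bound $\tau_{\mathrm c}\le-\lambda_{\min}(A)^{-1}$ in Lemma~\ref{lem:critical_coupling_lower}). I would check that the paper's convention makes $\tau_{\mathrm c}(K_2)=1$; apart from this point the argument is routine.
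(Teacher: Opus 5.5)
Your proposal is correct and follows the same route as the paper. It takes the lower bound from Lemma~\ref{lem:critical_coupling_lower}, uses the identity $\tau_{\mathrm c}(G)=\min_r\tau_{\mathrm c}(C_r)$ (which the paper only asserts and you justify through the block-diagonal structure of $A$), and gets $\tau_{\mathrm c}(S_{\Delta+1})=\Delta^{-1}$ from Lemma~\ref{lem:dmax}. Your separate treatment of $\Delta=1$ is a real refinement: for $S_2=K_2$ the star formula reduces to $1/(1+\tau)$, so equality holds only through the eigenvalue criterion $\tau_{\mathrm c}\le-\lambda_{\min}(A)^{-1}=1$, which is the same convention the paper uses for $K_n$ in Lemma~\ref{lem:critical_coupling_upper}.
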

\begin{proof}
    The lower bound follows from Lemma~\ref{lem:critical_coupling_lower}. 
    Recall that for a disconnected graph, the critical coupling equals the minimum of the critical couplings of its connected components. If $G$ has connected component $C \cong S_{\Delta+1}$, then 
    $\tau_{\mathrm{c}}(G) \le \tau_{\mathrm{c}}(S_{\Delta+1}) = \Delta^{-1}$. Combined with the general lower bound \(\tau_{\mathrm{c}}(G) \ge \Delta^{-1}\), we conclude that
    $
    \tau_{\mathrm{c}}(G) = \Delta^{-1}.
    $
\end{proof}
\begin{remark}
    Note that the sufficient condition in Corollary~\ref{cor:remark} is not necessary.
    Indeed, consider the double star on $7$ vertices: two hubs $i, \ell$ of degree $3$ sharing a common neighbour $j_1$, with two private leaves each (see Figure~\ref{fig:double_star}). Then $\Delta = 3$ and
    \begin{equation*}
        x_i^*(\tau) = x_\ell^*(\tau) = \frac{1-3\tau}{1-4\tau^2},
    \end{equation*}
    which vanishes at $\tau = \tfrac{1}{3} = \Delta^{-1}$. Thus $\tau_{\mathrm{c}} = \Delta^{-1}$, even though no component in the graph is a star $S_4$.
    \begin{figure}[h!]
        \centering
        \begin{tikzpicture}[
            hub/.style={circle, draw, fill=black!20, minimum size=6mm, inner sep=0pt},
            leaf/.style={circle, draw, fill=white, minimum size=6mm, inner sep=0pt}
        ]
            \node[hub] (i)  at (-1.5, 0) {$i$};
            \node[hub] (l)  at ( 1.5, 0) {$\ell$};
            
            \node[leaf] (j1) at (0, 0) {$j_1$};
            
            \node[leaf] (j2) at (-2.5,  1) {$j_2$};
            \node[leaf] (j3) at (-2.5, -1) {$j_3$};
            
            \node[leaf] (j4) at (2.5,  1) {$j_4$};
            \node[leaf] (j5) at (2.5, -1) {$j_5$};
            
            \draw (i) -- (j1);
            \draw (i) -- (j2);
            \draw (i) -- (j3);
            \draw (l) -- (j1);
            \draw (l) -- (j4);
            \draw (l) -- (j5);
        \end{tikzpicture}
        \caption{The double star with $\Delta = 3$. Shaded nodes are the hubs where $\mathbf{x^*}$ vanishes first.}
        \label{fig:double_star}
    \end{figure}
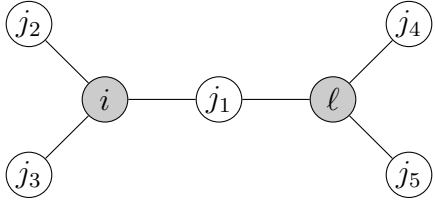
\end{remark}

\begin{lemma}\label{lem:critical_coupling_upper}
    Let $G$ be a simple graph on $n$ vertices, and let $\tau_{\mathrm c}(G)$ denote the critical coupling. Then
    \begin{equation}
        \tau_{\mathrm c}(G) \le 1.
    \end{equation}
    If $G$ is connected, the equality is attained if and only if $G\cong K_n$.
    If $G$ is not connected, the equality is attained if and only if every connected component of $G$ is a complete graph $K_{n_i}$ with $n_i \ge 2$.    
\end{lemma}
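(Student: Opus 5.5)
The plan is to get the upper bound from the spectral half of Lemma~\ref{lem:critical_coupling_lower}, and to treat the equality case as a characterisation of graphs whose smallest adjacency eigenvalue equals $-1$.

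\textbf{Upper bound.} Lemma~\ref{lem:critical_coupling_lower} already gives $\tau_{\mathrm c}(G)\le -\lambda_{\min}(A)^{-1}$. This is the value at which $I+\tau A$ becomes singular and the Jacobian $J(\mathbf{x^*})=-\diag(\mathbf{x^*})(I+\tau A)$ acquires a zero eigenvalue. It therefore suffices to show $\lambda_{\min}(A)\le -1$ whenever $G$ has at least one edge. (The edgeless case is excluded, as in Theorem~\ref{th:general} where $\Delta>0$.) Any edge $\{u,v\}$ induces a $K_2$, whose adjacency matrix has eigenvalues $\pm1$. Cauchy interlacing for the principal submatrix indexed by $\{u,v\}$ then gives $\lambda_{\min}(A)\le -1$, and hence $\tau_{\mathrm c}(G)\le 1$.

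\textbf{Strictness when some component is not complete.} Suppose some component $C$ of $G$ is not complete. A connected graph that is not complete contains an induced path $P_3$: take two non-adjacent vertices at distance $2$ together with a common neighbour. The adjacency matrix of $P_3$ has eigenvalues $0,\pm\sqrt2$. Interlacing again gives $\lambda_{\min}(A)\le-\sqrt2$, so $\tau_{\mathrm c}(G)\le 1/\sqrt2<1$. This settles the "only if" direction in both the connected and the disconnected case. Components that are single vertices have $x_i^*\equiv1$ and never trigger a bifurcation, so they can be ignored. This is why the statement requires $n_i\ge2$, that is, at least one genuine edge.

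\textbf{Attainment for unions of cliques.} Let every nontrivial component be some $K_m$ with $m\ge2$. Then $\mathbf{1}$ is an eigenvector of $A$ on each clique, and $x_i^*(\tau)=1/(1+\tau(m-1))$, which stays strictly positive for all $\tau\ge0$. The spectrum of $A$ restricted to $K_m$ is $\{m-1,-1\}$, so $I+\tau A$ stays nonsingular and the Jacobian stays negative definite (on the branch) exactly for $\tau<1$. At $\tau=1$ the eigenvalue $-x_i^*(1+\tau\lambda)$ with $\lambda=-1$ reaches zero, which is the pitchfork point. Thus the critical coupling, understood as the first bifurcation of the feasible branch in the sense of Lemma~\ref{lem:critical_coupling_lower}, equals $1$. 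Since the critical coupling of a disjoint union is the minimum over components, as used in Corollary~\ref{cor:remark}, the same value $1$ holds for the whole union.

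\textbf{Main obstacle.} The delicate step is the last one. The definition of $\tau_{\mathrm c}$ as the first $\tau$ at which some $x_i^*$ vanishes is never met for complete graphs, since $x^*$ stays positive. I would therefore make explicit, following the proof of Lemma~\ref{lem:critical_coupling_lower}, that $\tau_{\mathrm c}$ is taken as the first bifurcation of the feasible branch, of either type. Then $\tau_{\mathrm c}(K_m)=1$ comes from loss of stability rather than from extinction at a boundary.
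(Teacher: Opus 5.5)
Your proof is correct and follows the paper's skeleton. Both arguments use the bound $\tau_{\mathrm c}(G)\le-\lambda_{\min}(A)^{-1}$ from Lemma~\ref{lem:critical_coupling_lower}. Both settle the clique case by noting that $x_i^*=(1+\tau(m-1))^{-1}$ never vanishes, so the first bifurcation of $K_m$ is the loss of stability at $\tau=-\lambda_{\min}(A)^{-1}=1$.

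The difference is in the spectral step. The paper cites the extremal fact that $K_n$ uniquely maximises $\lambda_{\min}(A)$ among connected graphs on $n$ vertices, and then says the disconnected case follows componentwise. You instead prove what is needed by Cauchy interlacing. An induced $K_2$ gives $\lambda_{\min}(A)\le-1$, and an induced $P_3$ in any non-complete component containing an edge gives $\lambda_{\min}(A)\le-\sqrt2$. This makes the lemma self-contained and treats connected and disconnected graphs at once. It also yields a quantitative strengthening the paper does not state: every graph that has an edge and is not a disjoint union of complete graphs satisfies $\tau_{\mathrm c}\le 1/\sqrt2$. Hence no graph has its critical coupling in $(1/\sqrt2,1)$. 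You are also right to make explicit that $\tau_{\mathrm c}$ must be read as the first bifurcation of either type; the paper uses this reading only implicitly.

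One small correction concerns your reading of the statement. Isolated vertices do not explain the hypothesis $n_i\ge2$; they contradict its ``only if'' direction. Since $\tau_{\mathrm c}(K_1)=\infty$ and the critical coupling of a disjoint union is the minimum over components, $K_3\cup K_1$ attains $\tau_{\mathrm c}=1$. What your argument actually proves is the correct form of the characterisation: equality holds if and only if $G$ has an edge and every component containing an edge is complete.
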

\begin{proof}
    Let $G$ be connected and $d$-regular. Then
    \begin{equation*}
        x_i^* = (1 + \tau d)^{-1} > 0,\; \text{for all } i =1,\dots,n,
    \end{equation*}
    for all $\tau > 0$, so no transcritical bifurcation occurs. In particular, let $G$ be a complete graph $K_n$, then $\lambda_{\min}(A) = -1$, hence
    \begin{equation*}
        \tau_{\mathrm c}(K_n) =-\frac{1}{\lambda_{\min}(A)}= 1.
    \end{equation*}
    However, $K_{n}$ uniquely maximises $\lambda_{\min}(A)$ among all connected simple graphs with $n$ vertices~\cite{brouwer2012spectra}. 
    Since $\tau_{\mathrm{c}}(G) \le -\lambda_{\min}(A)^{-1}$, as shown in Lemma~\ref{lem:critical_coupling_lower}, it follows that $\tau_{\mathrm{c}}(G) \le \tau_{\mathrm{c}}(K_n) = 1$.
    If $G$ is not connected, the same reasoning applies component-wise.
\end{proof}

\subsection{Network optimisation}\label{sec:numerical}
To quantify triangle density, we use the clustering coefficients
\begin{equation}
    C(G)=\frac{1}{n}\sum_{i=1}^{n} C_i,
    \qquad
    C_i=\frac{2T_i}{d_i(d_i-1)}, \; d_i\ge 2,
\end{equation}
where $T_i$ is the number of triangles incident to node $i$, so that $C_i$ measures the fraction of possible triangles around node $i$ that are realized (and $C_i=0$ when $d_i<2$).
The degree sequence of $G$ is denoted by $d_G=(d_1,\dots,d_n)$.

Theorem \ref{th:general} suggests that complete graphs, for which $C(G)=1$, are also the most robust in terms of having maximal $\tau_{\mathrm{c}}$. 
To investigate potential dependencies between the critical coupling $\tau_{\mathrm{c}}$ and the mean clustering $C$, we optimise the network structure with respect to these quantities.
Specifically, among all graphs with degree sequence $d_G$, we identify those that attain the smallest and largest possible mean clustering by solving
\begin{equation}
    \arg\min_{G'} C(G')
    \quad\text{and}\quad
    \arg\max_{G'} C(G')
    \qquad\text{s.t.}\quad d_{G'}=d_G.
\end{equation}

We will find these optimal network structures by setting up the double‑edge‑swap Markov chain that rewires edges $(i, j)$ and $(k, l)$  in two distinct ways: 1) perpendicular rewiring, to $(i, k)$ and $(j, l)$, and 2) crosswise rewiring to $(i, l)$ and $(j, k)$. These rewiring operations preserve the degree sequence of the network, ensuring that each node retains the same number of neighbours (Figure~\ref{fig:rewire_method}).
\begin{figure}[h!]
    \centering
    \includegraphics[width=0.5\linewidth]{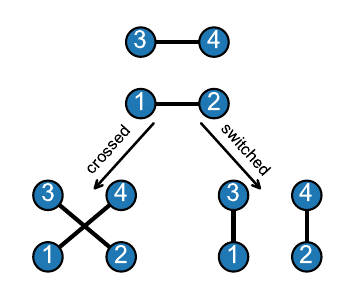}
    \caption{{\bf Illustration of the degree-preserving edge-swap (rewiring) operation.} Two edges connecting nodes $(1,2)$ and $(3,4)$ are replaced by either crossed or switched configurations, conserving the degree of each node while altering network topology.}
    \label{fig:rewire_method}
\end{figure}
More importantly, the double‑edge‑swap Markov chain is known to be irreducible \cite{greenhill2015switch}, which guarantees that any network with the same degree sequence is reachable.
In practice, the optimisation process may also spend a long time in various metastable states. Figure~\ref{fig:schematic} illustrates the optimisation procedure for a specific network that can be divided into disjoint cliques. We observe that optimisation of the mean clustering of the network increases the critical coupling, which aligns with the statement of Lemma~\ref{lem:critical_coupling_upper}.
\begin{figure}[h!]
    \centering
    \includegraphics[width=1.0\linewidth]{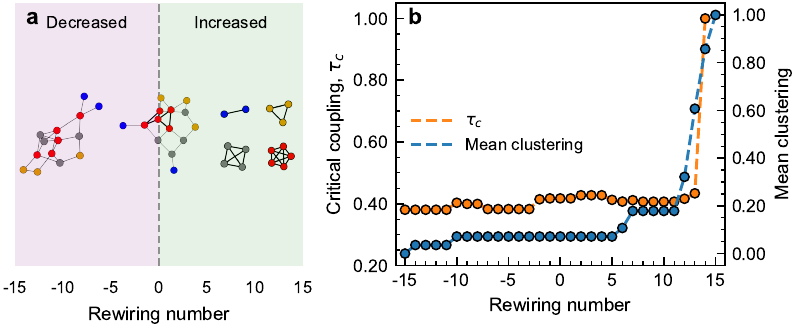}
    \caption{
    \textbf{Degree-preserving rewiring that increases clustering also increases the critical coupling.}
    Single network with $n=14$ nodes and $m=20$ edges.
    The rewiring number counts successive degree-preserving edge swaps relative to the initial network (0); positive values increase the mean clustering coefficient and negative values decrease it.
    \textbf{a}, Representative networks along the rewiring trajectory. The dashed vertical line marks the initial network. Purple (left) indicates reduced clustering and green (right) increased clustering.
    \textbf{b}, Critical coupling $\tau_{\text{c}}$ (left axis) and mean clustering coefficient (right axis) versus rewiring number.
    }
    \label{fig:schematic}
\end{figure}

To examine the full range of clustering and critical coupling, we apply the rewiring procedure to several random network models. 
Specifically, we consider random regular networks (uniform degree), configuration-model networks with degree sequences sampled from a Poisson distribution, geometric random networks, and networks generated by the Watts–Strogatz and Barabási–Albert models. 
All models are constructed to have the same mean degree.
For each model, we generate 100 networks with size 100. 
Within each model, all 100 realizations share the same degree sequence but differ in structure. 
For every network, we either optimise the mean clustering or the critical coupling of the network and record the measured values.

We observe that the mean clustering coefficient and critical coupling are positively correlated across all network models (Fig.~\ref{fig:clustering_coupling}). 
This relationship is strongest in random regular networks and weakest in Barabási–Albert networks, where clustering influences critical coupling only at low values. 
Variance peaks in Erdős–Rényi networks and remains small in random regular networks. 
Note that the optimisation procedure is important, as shown by the difference between obtained values in the first and second row. 
\begin{figure}
    \centering
    \includegraphics[width=1.0\linewidth]{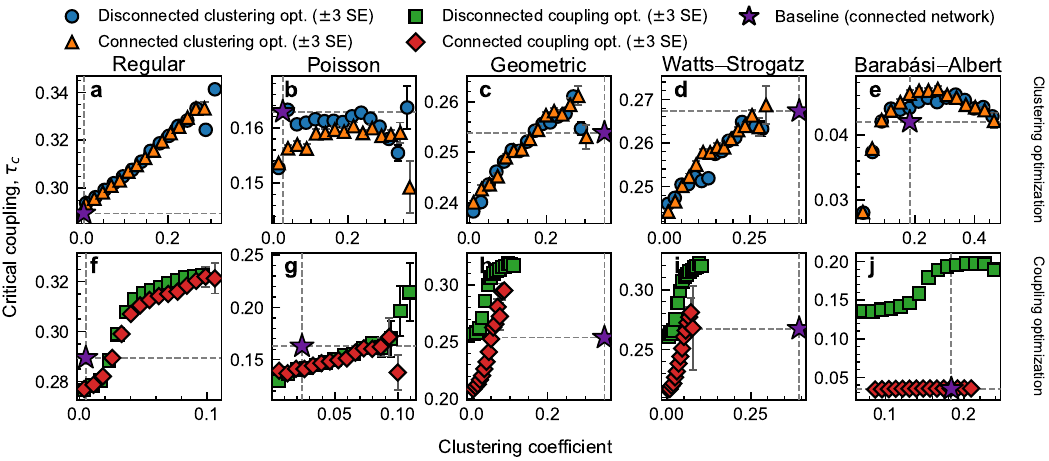}
    \caption{
    {\bf The critical coupling shows positive correlation with the mean clustering coefficient.}
    Each network is rewired to maximise or minimise the critical coupling $\tau_{\text{c}}$ or the clustering $C$ (1000 rewirings both for optimisation and minimisation).
    For each network model we report the critical coupling averaged over all graph realization obtained in all rewirings of $100$ starting configuration model networks.
    All networks have size $n=100$ and mean degree $\langle k\rangle \approx 4$. For every network type, all configurations share the same degree sequence.
    Circles and squares mark the unconstrained rewiring, which may disconnect the network; triangles and diamonds mark the rewiring constrained to connected graphs only. The two settings give the same positive trend in every model.
    {\bf a,f}, regular ($d=4$);
    {\bf b,g}, Erd\H{o}s--R\'enyi ($p=4/n$);
    {\bf c,h}, geometric ($r=2(\pi(n-1))^{-1/2}$);
    {\bf d,i}, Watts--Strogatz ($d=4$, $p=0.1$);
    {\bf e,j}, Barab\'asi--Albert ($m=2$).
    Error bars are $3$ standard error, while purple stars indicate the initial, unoptimised connected networks. 
    In the connected-clustering optimization panel for Barabási–Albert networks, the clustering changes, but the change is not visible at the scale of the plot.    }
    \label{fig:clustering_coupling}
\end{figure}

\subsection{Real-world networks}\label{sec:real_world_networks}
We apply our framework to real-world networks representing natural grassland ecosystems, where interspecific competition is primarily driven by resource limitation. 
We infer these networks from the dataset compiled by Scheifes et al.~\cite{Scheifes2024}, which combines vegetation-plot composition with measurements of nutrient concentrations --- nitrogen (N) and phosphorus (P) --- in the plants, in herbaceous plant communities across Northern Eurasia. 
We consider 872 plots classified into habitat types according to the EUNIS classification scheme~\cite{davies2004eunis}, summarized in Table~\ref{tab:habitats}. 
For each habitat, the set of all species observed across its plots defines the corresponding "species pool", from which we sample to construct competitive networks. 
Specifically, we draw random subsets of 100 species from each species pool and use them as the node set of the network.
Pairwise competition coefficients are computed from the niche overlap of species-specific resource-use strategies~\cite{haraldsson2023emerging}. 
We retain the strongest 10\% of edges.
The results do not depend on this percentage. Across the range of thresholds we tested, the realized networks have higher clustering and higher critical coupling than a null model with the same density (Methods, Fig.~\ref{fig:threshold_sensitivity}).
More details are provided in the Methods section.
To test whether the resulting network structure can be explained by degree heterogeneity alone, we compare each realized network to a configuration-model ensemble that preserves its degree sequence. 
For each environment, we generate 500 configuration-model realizations as simple networks (i.e., no self-loops and no multi-edges) with the same degree sequence as the realized network. 
We then compare the realized clustering coefficient difference $\Delta C$ and the critical coupling difference $\Delta \tau_{\text{c}}$ (Figure~\ref{fig:real_world}a) against the corresponding distributions under the null ensemble. 
In all environments, $\Delta C>0$ and $\Delta \tau_{\text{c}} >0$ for all $M=500$ configuration-model realizations.
\begin{table}[h!]
    \centering
    \caption{Habitat classification acronym,  group and type, with the number of species recorded per each habitat.}
    \small
    \setlength{\tabcolsep}{6pt}
    \renewcommand{\arraystretch}{1.15}
    \begin{tabularx}{\linewidth}{@{} l >{\raggedright\arraybackslash}X >{\raggedright\arraybackslash}X r @{}}
        \toprule
        \textbf{Var.} & \textbf{Habitat group} & \textbf{Habitat type} & \textbf{\# species} \\
        \midrule
        R1 & \multirow{3}{=}{Grasslands and lands dominated by forbs, mosses or lichens} & Dry grasslands & 128 \\
        R2 &  & Mesic grasslands & 230 \\
        R3 &  & Seasonally wet and wet grasslands & 303 \\
        \addlinespace[0.4em]
        N1 & Coastal habitats & Coastal dunes and sandy shores & 158 \\
        \addlinespace[0.4em]
        Q2 & \multirow{2}{=}{Wetlands} & Valley mires, poor fens and transition mires & 155 \\
        Q5 &  & Helophyte beds & 192 \\
        \addlinespace[0.4em]
        S9 & Heathlands, scrub and tundra & Riverine and fen scrub & 151 \\
        \bottomrule
    \end{tabularx}
    \label{tab:habitats}
\end{table}

\begin{figure}[h!]
    \centering
    \includegraphics[width=1.0\linewidth]{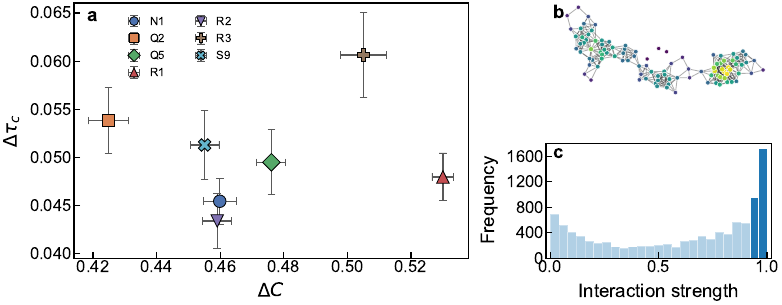}
    \caption{\textbf{Construction and properties of realized interaction networks.}
    \textbf{a,} Differences between real networks and degree-preserving configuration-model controls. 
    Horizontal axis $\Delta C = C_{\mathrm{real}} - C_{\mathrm{conf}}$ and vertical axis $\Delta \tau_{\text{c}} = \tau_{c,\mathrm{real}} - \tau_{c,\mathrm{conf}}$. $C_{\mathrm{real}}$ and $\tau_{c,\mathrm{real}}$ are measured on the real network. $C_{\mathrm{conf}}$ and $\tau_{c,\mathrm{conf}}$ are the corresponding configuration-model quantities averaged over 500 realizations. 
    Symbols denote environments as defined in Table~\ref{tab:habitats}. Results are averaged over 30 random subsets of 100 species from each environment, while error bars are one standard error.
    \textbf{b,} Example realized network of size $n=100$ after retaining the strongest 10\% of edges, where node size and colour are proportional to degree.
    \textbf{c,} Distribution of pairwise interaction strengths. The opaque region indicates interactions set to zero corresponding to absent edges in the realized network; 90\% of edges are removed.}
    \label{fig:real_world}
\end{figure}

\section{Discussion}
Triadic closure, quantified by the clustering coefficient $C$, stabilises competitive networks. 
We measure stability by the critical coupling, i.e. the competition strength at which exclusions first appear. 
To separate topology from degree effects, we compare against random-network controls that preserve the degree sequence while changing which pairs are linked. 
Under this constraint the direction is consistent across all simulations we report: higher clustering corresponds to larger critical coupling.
Real-world grassland networks follow the same trend, being more clustered and showing higher critical coupling than their degree-matched controls. 
This suggests that adaptation might be occurring in these networks.

Our analysis has been done for sparse and competitive networks, which are pervasive in nature, e.g.~\cite{connell1983}.
Extending the analysis to non-sparse networks, where species interaction is ubiquitous, could be an interesting follow-up.
Moreover, it still remains an open question if other types of systems, i.e. mutualistic, facilitative, or trophic, also show the same behaviour.
Analysing additional real-world networks, for example in gut microbiome~\cite{coyte2015, camacho2024}, can further strengthen our work.

If interactions are directed, the interaction matrix is no longer symmetric and our analysis no longer applies, since the stability bounds we derive rely on its eigenvalues being real. Recent random-matrix results show that, for such asymmetric networks, stability is governed by the sign pattern of the interactions~\cite{valigi2025eigenvalue}.

In our analysis, we allow for the networks to become disconnected.
This fragmentation can raise the critical coupling by containing extinction events in sub-networks, rather than allowing cascades to propagate through the full network. 
That mechanism is ecologically plausible: fragmented habitats and limited dispersal would localise competitive effects in much the same way.

Finally, let us mention that our results are potentially relevant to various disciplines outside of ecology. 
In economics, the Lotka-Volterra system is often considered. Examples are modelling economies and banking systems~\cite{comes2012banking, moran2019may}.
Also in computer science this system is used, for example in congestion control~\cite{pavlos2009, iguchi2005} or performing combinatorial optimisation~\cite{mooij2024findinglargeindependentsets}.

We also note that our analysis does not extend to directed graphs. For such systems, comparatively little is known about how specific structural features influence the associated dynamics. 
The first results in this direction concern spectral bounds, obtained from matrix norms of the adjacency matrix~\cite{LinLuTian2013}, were the degrees also appear to be important. 
Similarly,  directed cycles of arbitrary length~\cite{BrualdiLiu1995} (often referred to as feedback loops) also have effect on digraph spectra.
In addition, random models for directed networks, more fully developed in the literature (see, e.g., \cite{MetzNeriRogers2019}), can provide insight into which structural properties are relevant for stability.

Overall, we conclude that stability is not only dictated by node degrees. 
The arrangement of competitive links matters, and triadic closure in particular shifts the threshold at which competitive exclusion sets in.

\section{Methods}
\paragraph{Optimisation routine.}
We optimise by degree-preserving edge rewiring with a Metropolis-Hastings algorithm. Let $\mathcal{O}(G)$ denote the objective (mean clustering or critical coupling when increasing it, or its negative when decreasing it), and let $\delta=\mathcal{O}(G')-\mathcal{O}(G)$ be the change from a proposed rewiring $G \to G'$. We accept proposals with probability
\begin{align}
    P_{\mathrm{acc}}=
    \begin{cases}
    1, & \delta \ge 0,\\[4pt]
    \exp(\delta/T), & \delta < 0,
    \end{cases}
\end{align}
where $T>0$ is a temperature controlling acceptance of downhill moves (larger $T$ implies more exploration). This stochastic acceptance helps escape local minima.

For each network and for each direction (increase vs. decrease of mean clustering) we perform many rewirings.
At each step we sample candidate edge pairs uniformly at random and reject any move that would create self-loops or multi-edges, thereby preserving the degree sequence. For every network type we run multiple independent optimisations that share the same degree sequence, obtained by generating one instance of the model and re-randomising edges while preserving that sequence.
\\ \\
\paragraph{Global clustering coefficient.}
The mean clustering coefficient $C(G)$ averages a per-node quantity and weights all nodes equally regardless of degree. An alternative is the global clustering coefficient, defined as the fraction of paths of length two that are closed into a triangle,
\begin{equation}
    C_{\mathrm{global}}(G) = \frac{\text{number of closed paths of length two}}{\text{number of paths of length two}}.
\end{equation}
We repeated the rewiring procedure using $C_{\mathrm{global}}$ instead of $C$ as the clustering objective, restricting the rewiring to connected graphs.
We observe that the critical coupling and the global clustering coefficient are positively correlated across all network models and both optimisation directions (Fig.~\ref{fig:global_clustering}).
The trend matches the one obtained for the mean clustering coefficient, including for the Barabási–Albert model, where the degree distribution is most heterogeneous.
\begin{figure}[h!]
    \centering
    \includegraphics[width=1.0\linewidth]{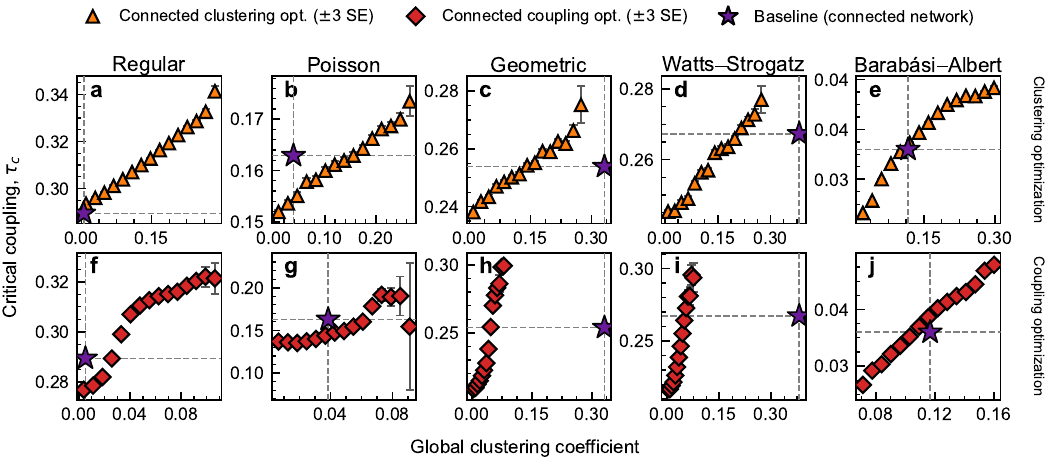}
    \caption{
    {\bf The critical coupling increases with the global clustering coefficient.}
    Each network is rewired to maximise or minimise the critical coupling $\tau_{\text{c}}$ or the global clustering coefficient $C_{\mathrm{global}}$ (1000 rewirings both for optimisation and minimisation), with the rewiring constrained to connected graphs.
    For each network model we report the global clustering coefficient and critical coupling across $100$ networks.
    All networks have size $n=100$ and mean degree $\langle k\rangle \approx 4$. For every network type, all configurations share the same degree sequence.
    Top row optimises clustering, bottom row optimises the critical coupling.
    {\bf a,f}, regular ($d=4$);
    {\bf b,g}, Erd\H{o}s--R\'enyi ($p=4/n$);
    {\bf c,h}, geometric ($r=2(\pi(n-1))^{-1/2}$);
    {\bf d,i}, Watts--Strogatz ($d=4$, $p=0.1$);
    {\bf e,j}, Barab\'asi--Albert ($m=2$).
    Error bars are three standard errors, while purple stars indicate the initial, unoptimised connected networks.
    }
    \label{fig:global_clustering}
\end{figure}
\\ \\
\paragraph{Calculation of the critical coupling.}
To determine the critical coupling, we follow the approach of~\cite{mooij2024findinglargeindependentsets}, and apply Newton's method to the function $F(\tau) = \prod_{i=1}^{n} x_i^*(\tau),$
where
$\mathbf{x}^*(\tau) = (I + \tau A)^{-1}\mathbf{1}.$
This yields the iteration
\begin{equation}
    \tau_{k+1}= \tau_k-\frac{\prod\limits_{i=1}^{n} x_i^*(\tau_k)}{\sum\limits_{j=1}^{n}
        \left[-\bigl(M^{-1}AM^{-1}\mathbf{1}\bigr)_j\right]\prod\limits_{i \neq j} x_i^*(\tau_k)},
\end{equation}
where $M = I + \tau A$, leading to the bifurcation value $\tau_{\text{c}} = \lim_{k \to \infty} \tau_k$. The equilibrium point $\mathbf{x}^* > 0$ can be obtained by inverting the matrix $M$. This inverse exists for all coupling strengths $\tau$ smaller than $|\lambda_{\min}(A)|^{-1}$,
where $\lambda_{\min}(A)$ is the smallest eigenvalue of $A$. When no transcritical bifurcation occurs first, the inverse ceases to exist at $\tau = |\lambda_{\min}(A)|^{-1}$ and the system undergoes a pitchfork bifurcation.

\paragraph{Grassland competitive interaction networks}
Our aim is to obtain a data-based approximation of competitive structure within each environment using the nutrient-limitation information available in the dataset. The common ecological assumption is that species are more likely to compete strongly when their environmental preferences ("niches") are more similar, i.e. their niches overlap more \cite{wandrag2019quantifying,chesson2000mechanisms}. To implement this, we defined their niches based on the nutrient concentrations measured in the plots where they occurred. If they tend to occur under similar nutrient concentration conditions, their competition is stronger. In particular,  we use the ratio of nitrogen and phosphorus $N:P$ as the relevant niche to define competition, as it has been identified as a key driver of the dynamics of these grasslands \cite{fay2015grassland,gusewell2004n,olde2003species}. For each species $i$, we collect the set of $N:P$ values from plots in which the species is present, where presence is defined as abundance $>0$. We then calculate the mean and standard deviation of these values, denoted $(\mu_i,\sigma_i)$. Finally, we assume that the niche of the species $i$  along the $N:P$ axis is a normal distribution $f_i(x)$ with mean $\mu_i$ and standard deviation $\sigma_i$ where $x := N:P$. Following a common approach in ecology \cite{haraldsson2023emerging,mason2005functional}, pairwise interaction strength is given by the niche overlap, quantified by the Pianka index \cite{pianka1973structure}.

We retain only the strongest inferred interactions, removing the weakest 90\% of off-diagonal overlap values and keeping the strongest 10\% (Figure~\ref{fig:real_world}a). 
We performed a sensitivity analysis on the exact percentage of edges removed, see below.
After thresholding, retained overlaps are treated as edges and removed overlaps as absent edges. 
The realized network is represented by an unweighted adjacency matrix $A$ obtained by binarising the thresholded overlaps; self-interactions are excluded ($A_{ii}=0$). An example realized network with $n=100$ is shown in Figure~\ref{fig:real_world}b.

\paragraph{Sensitivity analysis of real-world networks.} 
To assess if our results were sensitive to the exact percentage of removed edges, we repeated the analysis for the percentages $75\%,$ $80\%,$ $85\%,$ $90\%,$ $95\%$ for three habitats (R1, Q5 and S9). 
Each sampled network has 100 nodes; for each habitat/percentage we generated 50 sampled networks and, for each, 100 configuration-model networks with matched degree sequence.
We observe that clustering and the critical coupling are closer to the values predicted by the configuration model. However, they remain distinctly higher overall~(Fig.~\ref{fig:real_world_sensitivity}).

The number of retained edges sets the density of the realized network, and density by itself affects the clustering coefficient. To control for this, we compare each realized network against a null with the same number of edges and the same degree sequence, and vary the retained fraction from a few percent up to 50\%. For each fraction we measure the mean clustering coefficient, the global clustering coefficient, and the critical coupling. We observe that the realized networks exceed the null in all three quantities at every threshold (Fig.~\ref{fig:threshold_sensitivity}). The difference is already present below 10\%, so the choice of threshold does not drive the result.

\begin{figure}[h!]
    \centering
    \includegraphics[width=\linewidth]{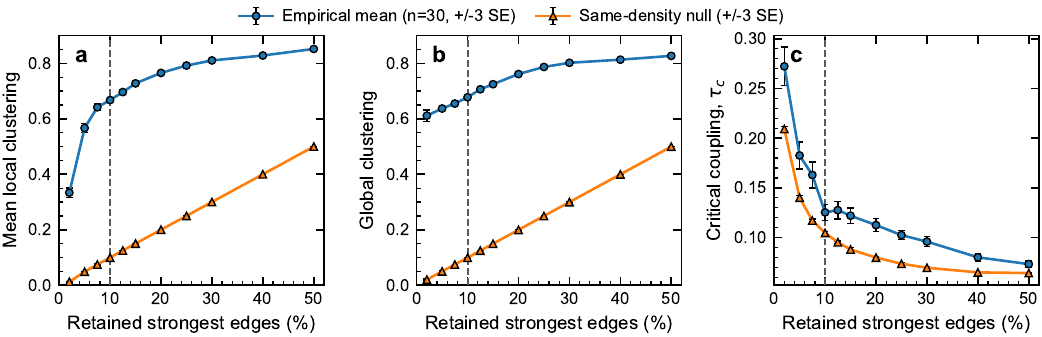}
    \caption{
    {\bf Clustering and stability against a same-density null across thresholds for the environment R1 (dry grasslands).}
    Mean local clustering ({\bf a}), global clustering ({\bf b}), and critical coupling $\tau_{\mathrm{c}}$ ({\bf c}) as the fraction of strongest retained edges varies from a few percent to 50\%.
    Blue: empirical networks (mean over $n=30$ sampled networks, $3$ standard errors). Orange: a null with the same edge count and degree sequence at each threshold.
    The dashed line marks the 10\% cutoff used in the main text.
    The empirical curve lies above the null at every threshold for all three quantities.
    }
    \label{fig:threshold_sensitivity}
\end{figure}

\begin{figure}[h!]
    \centering
    \includegraphics[width=0.65\linewidth]{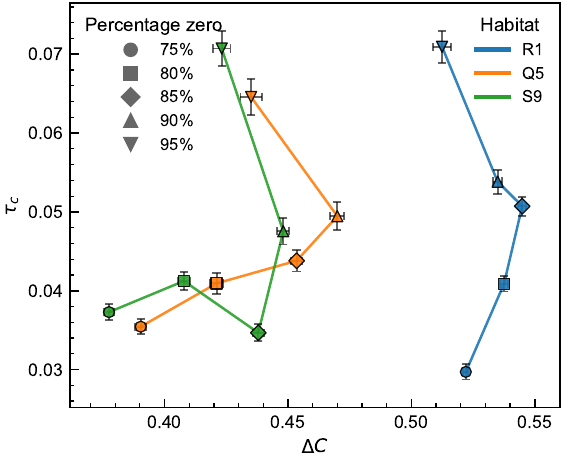}
    \caption{
    {\bf Sensitivity of the analysis to the percentage of edges that are removed across habitats.} 
    Points show percentage means for each habitat (blue/orange/green), with error bars indicating one standard error. 
    Lines connect percentages within each habitat; marker shapes denote the percentage of edges removed.
    Each sampled network has 100 nodes; for each habitat/percentage we generated 50 sampled networks and, for each, 100 configuration-model networks with matched degree sequence. 
    }
    \label{fig:real_world_sensitivity}
\end{figure}

\section{Code availability}
The code for this paper is available at\\ \url{https://github.com/NiekMooij/An-adaptation-mechanism-regulating-triadic-closure}.

\section{Data availability}
No data is collected for this manuscript.


\bibliographystyle{naturemag}

\section{Funding}
This work was supported by the Italian National Biodiversity Future Center (NBFC): National Recovery and Resilience Plan (NRRP), Mission 4 Component 2 Investment 1.4 of the Italian Ministry of University and Research; funded by the EU - NextGenerationEU (Project code CN 00000033). MB also acknowledges funds from the SENTINEL PRIN project – Call 2022, Grant No. 2022CM4F3X.
This work was supported by the following NWO programs: Talentprogramme, project number VI.C.202.081; and research program VIDI, project number VI.Vidi.213.108. 
Furthermore, the project acknowledges funds from the SENTINEL PRIN project – Call 2022, Grant No. 2022CM4F3X by the Italian Ministry of University and Research (MUR);  from the DiviN-P project under the 2021-2022 BiodivProtect joint call, co-funded by the European Commission (GA No. 101052342) and MUR (CUP:B53C23001040006); and from the Italian National Biodiversity Future Center (NBFC): National Recovery and Resilience Plan (NRRP), Mission 4 Component 2 Investment 1.4 of MUR, funded by the EU - NextGenerationEU (CN 00000033).

\section{Acknowledgements}
The authors are grateful to Annegreet Veken for discussion that helped define the real-world networks and for supplying the data used to generate these networks. 
The authors are grateful to Jochem Hoogendijk, Mike de Vries, and Ruben Hendriks for helpful discussions. 
MNM gratefully acknowledges support from Complex Systems Fund, with special thanks to Peter Koeze. 
MB acknowledges funds from the SENTINEL PRIN project – Call 2022, Grant No. 2022CM4F3X by the Italian Ministry of University and Research (MUR);  from the DiviN-P project under the 2021-2022 BiodivProtect joint call, co-funded by the European Commission (GA No. 101052342) and MUR (CUP:B53C23001040006); and from the Italian National Biodiversity Future Center (NBFC): National Recovery and Resilience Plan (NRRP), Mission 4 Component 2 Investment 1.4 of MUR, funded by the EU - NextGenerationEU (CN 00000033).
AvdH gratefully acknowledges support from NWO under the NWO Talentprogramme, project number VI.C.202.081.
IK gratefully acknowledges support from Netherlands Research Organisation (NWO), research program VIDI, project number VI.Vidi.213.108. 

\section{Competing interests}
The authors declare no competing interests.

\section{Materials \& Correspondence}
Please contact MNM for correspondence.

\end{document}